\documentclass[5p,final]{elsarticle}
\usepackage{amsfonts}
\usepackage{amssymb}
\usepackage{amsmath}  
\usepackage{amsthm}
\usepackage{enumitem} 
\usepackage{geometry}
\usepackage{graphicx}
\usepackage{algorithm}
\usepackage{algcompatible}
\usepackage{algpseudocode}
\usepackage[colorlinks=true, allcolors=blue]{hyperref}
\usepackage{hyperref}
\usepackage[utf8]{inputenc} 
\usepackage[english]{babel} 
\usepackage{url}
\usepackage{wrapfig}
\usepackage{graphicx}
\usepackage{subfiles}
\usepackage{xfrac}
\usepackage{multirow}
\usepackage{cleveref}
\usepackage{balance}
\usepackage{svg}
\usepackage{booktabs}
\usepackage{tikz}
\usetikzlibrary{shapes.geometric, arrows, arrows.meta}
\usepackage{tikz-3dplot}
\usepackage[squaren]{SIunits}
\usepackage{caption}
\usepackage{multicol}
\usepackage{makecell}
\usepackage{rotating}
\usepackage{tabularx}
\usepackage[normalem]{ulem}
\usepackage{subcaption}
\usepackage{lipsum}
\usepackage{todonotes}
\usepackage{float}
\usepackage{mathtools}

\newtheorem{definition}{Definition}

\newcommand{\bc}{\begin{center}}
\newcommand{\ec}{\end{center}}
\newcommand{\be}{\begin{equation}}
\newcommand{\ee}{\end{equation}}
\newcommand{\bea}{\begin{eqnarray}}
\newcommand{\eea}{\end{eqnarray}}
\newcommand{\beq}{\begin{eqnarray*}}
\newcommand{\eeq}{\end{eqnarray*}}
\newcommand{\bv}{\left( \begin{array}{c} }
\newcommand{\ev}{\end{array} \right) }

\usepackage{natbib}

\newtheorem{proposition}{Proposition}
\newtheorem{remark}{Remark}

\begin{document}
\title{A Simple Hierarchical Causality Primer}

\author[uct-sta]{Tim Gebbie}
\ead{tim.gebbie@uct.ac.za}
\address[uct-sta]{Department of Statistical Science, University of Cape Town, Rondebosch 7701, South Africa}

\date{\today}

\begin{abstract}
We provide a brief primer for the idea behind formalising hierarchical causality in the context of complex systems. Here 
actors are not simply agents. Actors instantiate causation classes. Agents implement local dynamics in given levels of organisation in a given system. Hierarchical causality then describes how actor-level roles constrain, select, and organise agent-level behaviour across levels. The system then necessarily requires three additional structures. First, causation classes to abstract a given form of causal influence that an actor instantiates. Second, aggregation operators to move across the levels.  Third, discrete event-time maps are required because the system comprises events, and the relation between local event counts and any global clock must be specified. Our formulation here is purposefully simple and discrete. 
\end{abstract}

\maketitle
\tableofcontents

\section{Introduction}\label{sec:introduction}

Here the levels of organization are where things happen and where agents are interacting components. Actors are then carriers of top-down influence and are grouped into causation classes which are kinds of top-down causal operation equivalence classes. This is so that sets of lower-level realizations are treated as functionally the same. Here actors are a computational model used to characterize a specific type of top-down causation. Actors differ from agents the way top-down causation differs from bottom-up interacting components. 

In summary, an agent is a local, interacting unit in the system: trader, order, fund, node, borrower, or lattice element. An actor is a role-bearing or control-bearing entity that exemplifies a top-down causal mode. \footnote{In the original actor model, computation is organised as autonomous entities that encapsulate state and interact solely through asynchronous message passing and the creation of new actors \cite{Hewitt1973Actor,Hewitt1977Control}. Here, we use the term more loosely to represent a causal equivalence class -- this is purposefully referring to the prior equivalence class as a computational representation distinct from agents.} Think predictor, profiteer, investor, trader, regulator/ruler as specific causal-role archetypes \citep{wilcox2014hierarchical}, and not merely physical persons. Causation classes are then the abstract form of causal influence that the actor instantiates and is implemented as this or that agent in some level in the hierarchy. The causation classes are minimally described here as: algorithmic structuring, fixed-goal control, adaptive selection, adaptive feedback, or adaptive-goal selection \citep{auletta2008topdown,wilcox2014hierarchical}.

The reason why agents are not enough is that if you stay only at a given agent level, you can describe interactions, but you miss why the same sort of agent behaves differently across contexts. So the argument is basically that agents provide the local mechanics; actors represent organized causal roles; causation classes specify the kind of top-down influence; the system outcome depends on all of these interacting across levels. This implies the need for aggregation operators that move across levels, and transition kernels that describe within-level dynamics. The equivalence class structure allows lower-level realizations to be treated as functionally the same from the higher-level standpoint - these are what make higher-level causation possible without needing one privileged micro-realization.

The aim of the primer is to try to more formally make the distinction between top-down causation as constraints on the system as opposed to bottom-up aggregation driven emergence via mechanisms and algorithms. To then provide a simple toy-model with two levels: a lower level and an upper level for illustrative purposes \ref{app:minimal-discrete-example}. A more sophisticated agent-based trading example can be found when one introduces adaptive feed-back learning in trading simulations \cite{dicks2024}. 
More generally, one can expect to formulate equivalence classes of agent-based models using the language of discrete-time random walks, one for each layer. Under appropriate scaling limits and continuous-time random time changes, these may lead to coupled reaction--diffusion descriptions, with cross-level coupling entering through constraints or boundary conditions.

\section{Background}\label{sec:background}

Top-down causation is not new. The useful starting point is the observation that complex systems have levels, and that higher-level organisation can matter without requiring a unique lower-level realisation. In this sense the higher level does not replace the lower level. It constrains it, selects among its admissible realisations, or changes which lower-level dynamics are relevant. This is the main idea we retain from the top-down causation literature \citep{ellis2012topdown,auletta2008topdown}. This is also what makes it quite distinct from much of the prevailing complexity literature which is by-and-large bottom-up and reductionist.

The important point is not the phrase ``top-down''. It is the existence of causal equivalence classes. A higher-level state, function, or role may be realised by many lower-level configurations. These lower-level configurations can be different as microstates and still be the same for the higher-level causal question. This is why aggregation alone is not enough. Aggregation produces a macro-description. Hierarchical causality asks whether that macro-description also carries a causal role for the lower-level transition structure.

There is also related work on causal abstraction and causal emergence. Structural causal-model approaches ask when two causal descriptions at different resolutions are consistent under interventions \citep{rubenstein2017causal,beckers2019abstracting}. Causal-emergence approaches ask when a macro-description can be causally stronger, or less noisy, than a micro-description \citep{hoel2013quantifying}. These literatures are related -- but what we are doing is a bit narrower. We are not trying to replace structural causality, causal abstraction, or causal-emergence measures. We want to use a discrete kernel schema to separate three things that are often collapsed: aggregation across levels, transition within levels, and actor-class constraints on admissible transitions. The schema is also adjacent to controlled Markov processes and constrained Markov decision processes \citep{puterman1994mdp,altman1999cmdp}. The difference is that the actor instance here is not primarily an optimiser over rewards or costs. It is a role-bearing cross-level constraint that restricts admissible lower-level kernels through an interface. In this sense the framework borrows the language of admissible controlled dynamics, but not the usual optimisation problem of Markov decision theory.

The approach here is really to pry these objects apart, but is otherwise modest. \footnote{I will very loosely draw on ideas from category theory \cite{spivak2014category}.} A hierarchical causal system is not just a multiscale model. It is a multiscale model with role-bearing actors and causation classes that act on the lower-level dynamics through constraints, selection, feedback, or goal-directed kernel modification. The primer tries to make this statement precise enough to be useful but not to overly mathematise the idea itself. 

\section{Hierarchical Causality}\label{sec:core-definition}

The point here is to keep hierarchy, dynamics, constraint, and event time separate. If these are collapsed into one mechanism then hierarchical causality becomes either ordinary coarse-graining or ordinary agent dynamics. Here for brevity we are working with finite or countable state spaces, so that transition kernels can be written as conditional probabilities and evaluated state by state. A more general measurable-space version would replace these probabilities by Markov kernels on measurable spaces \citep{kallenberg2002foundations}.

\begin{definition}[Hierarchical causal system]
\label{def:hierarchical}
A discrete hierarchical causal system is a tuple
\[
\Psi=\{H,D,C,U\},
\]
with at least four elements. Here $H$ denotes the structure required to specify the hierarchy, $D$ denotes the structure required to specify the dynamics, $C$ denotes the top-down constraint structure, and \(U\) denotes the discrete event-time maps used to relate local event counts to each other and, when needed, to a global clock.

\begin{enumerate}[label=(\roman*)]
    \item The hierarchy is the collection $H$ that contains the levels, configuration spaces, agent index sets, accessible state spaces, and aggregation maps. Thus
    \[
    H=\left\{\{X_{\ell}\}_{\ell=0}^{L},\{I_{\ell}\}_{\ell=0}^{L},\{S_{\ell}\}_{\ell=0}^{L},\{\Pi_{\ell}\}_{\ell=0}^{L-1}\right\}.
    \]
    The index $\ell\in\{0,\ldots,L\}$ labels levels of organisation, with lower $\ell$ denoting the more microscopic description. Here $I_{\ell}$ denotes the set of level-specific sites, agents, or components, and $S_{\ell}$ denotes the accessible state space at level $\ell$. For the present discrete formulation $S_{\ell}$ may be taken finite or countable; more general measurable state spaces require Markov kernels. A level state at operational event count $n_{\ell}$ is
    \[
    X_{\ell,n_{\ell}}
    =
    \left(x^{\ell}_{i,n_{\ell}}\right)_{i\in I_{\ell}}
    \in X_{\ell},
    \qquad
    x^{\ell}_{i,n_{\ell}}\in S_{\ell}.
    \]
Here $X_{\ell}$ is the level-$\ell$ configuration space. When every assignment of state values to level-$\ell$ sites or agents is admissible, $X_{\ell}=S_{\ell}^{I_{\ell}}$; otherwise $X_{\ell}\subseteq S_{\ell}^{I_{\ell}}$ is the admissible configuration space. It is important to notice that each level $\ell$ has its own sites, agents, state spaces, and event counts, denoted here by $I_{\ell}$, $S_{\ell}$, and $n_{\ell}$. Aggregation is given by
    \[
\Pi_{\ell}:X_{\ell} \rightarrow X_{\ell+1}.
    \]
    This map need not be an average. It may be a coarse-graining, projection, threshold, classification, or other map that makes a higher-level state from lower-level realisations.

    \item The dynamics are given by \(D\), which contains the admissible within-level transition structures. For each level there is a class
\(\mathcal{K}_{\ell}\) of admissible discrete kernels, and a realised kernel
\[
K_{\ell,n_{\ell}}\in\mathcal{K}_{\ell},
\]
with transition probabilities written as
\(K_{\ell,n_{\ell}}(x'_{\ell}\mid x_{\ell})\). This is the bottom-up part of
the system. Agents interact locally, and higher-level regularities may emerge through aggregation of these dynamics. The kernel is indexed by the
level-specific event count \(n_\ell\). The coordination of event counts across levels is specified separately by \(U\).

    \item The constraints are encapsulated in \(C\), which contains level-indexed actor sets \(\{\mathcal{A}_r\}_{r=0}^{L}\), causation classes \(\mathcal{C}\), target subsystems, and interfaces. Here \(\mathcal{A}_r\) denotes the set of actors represented at level \(r\). An actor \(a\in\mathcal{A}_r\) instantiates a causation class through
\[
\gamma:\bigcup_{r=0}^{L}\mathcal{A}_{r}\rightarrow\mathcal{C}.
\]
The operative cross-level object is an actor instance
\[
\alpha=(a,r,c,\ell,B_{\ell},\eta),
\qquad c=\gamma(a),\quad r>\ell,
\]
where \(B_{\ell}\subseteq I_{\ell}\) is the lower-level target subsystem, \(\eta\) is the interface, channel, rule, signal, protocol, incentive, or admissibility test through which the higher-level role is implemented, and \(c\) is the causation class. When a target subsystem \(B_\ell\) is being constrained, write \(\mathcal{K}_{\ell,B_\ell}\) for the relevant local,
restricted, marginal, or conditional kernel class on \(B_\ell\). The
corresponding target-subsystem section of a realised kernel
\(K_{\ell,n_\ell}\) is denoted by \(K^{B_\ell}_{\ell,n_\ell}\).

The actor instance induces a set-valued constraint correspondence
\[
D^{\alpha}_{r\rightarrow\ell}:X_{r}\rightrightarrows
\mathcal{K}_{\ell,B_{\ell}}.
\]
For a higher-level state \(X_{r,n_{r}}\), admissibility of the realised
lower-level dynamics on the target subsystem means
\[
K^{B_\ell}_{\ell,n_{\ell}}
\in
D^{\alpha}_{r\rightarrow\ell}(X_{r,n_{r}}).
\]
Top-down causation is located here. It is not aggregation and it is not the
lower-level dynamics themselves. It is the constraint, selection, weighting, or
parameterisation of the target-subsystem transition structure by a higher-level
actor role through an interface.

    \item The time structure \(U\) contains discrete event-time maps. We use the
term discrete subordinator as a non-decreasing map from a global event index to
a level-specific event count, not necessarily a Lévy subordinator. Each level
has its own operational event count \(n_{\ell}\), and a global event index
\(m\in\mathbb{N}\) coordinates the levels through non-decreasing maps
\[
n_{\ell}=U_{\ell}(m),\qquad U_{\ell}(0)=0.
\]
Different levels may therefore update on different event scales. If calendar
time is later required, it may be introduced by an additional non-decreasing
time change from event counts to calendar time. This is not part of the basic
discrete definition. How events happen and are counted is typically determined
bottom-up in tandem with the constraints.
    \item Aggregation equivalence is defined by:
    \[
x_{\ell}\mathrel{\underset{\Pi_{\ell}}{\sim}} y_{\ell}\quad\Longleftrightarrow\quad\Pi_{\ell}(x_{\ell})=\Pi_{\ell}(y_{\ell}).
    \]
    Aggregation equivalence may identify states that are not causally equivalent.
\item Causal equivalence is actor-specific and is stronger for the causal question being asked. For a fixed actor instance \(\alpha\), higher-level state \(x_r\), and lower-level state \(x_\ell\), define the admissible outgoing target-subsystem kernel sections
\[
\mathcal{D}^{\alpha}_{x_r}(x_\ell)
:=
\{K^{B_\ell}(\cdot\mid x_\ell):K^{B_\ell}\in
D^{\alpha}_{r\rightarrow\ell}(x_r)\}.
\]
Here \(K^{B_\ell}(\cdot\mid x_\ell)\) is understood as the relevant local, restricted, marginal, or conditional outgoing kernel section on \(B_\ell\).
Then two lower-level states are causally equivalent for \(\alpha\) at \(x_r\) when
\[
x_\ell\sim^{\alpha}_{x_r} y_\ell
\quad\Longleftrightarrow\quad
\mathcal{D}^{\alpha}_{x_r}(x_\ell)
=
\mathcal{D}^{\alpha}_{x_r}(y_\ell).
\]
Thus aggregation equivalence may identify states that are not causally equivalent for the actor instance being considered.
\end{enumerate}

The system is hierarchically causal when there exists at least one actor instance \(\alpha=(a,r,c,\ell,B_{\ell},\eta)\), with \(r>\ell\), such that the realised lower-level kernel, through its target-subsystem section, satisfies
\[
K^{B_\ell}_{\ell,n_{\ell}}
\in
D^{\alpha}_{r\rightarrow\ell}(X_{r,n_{r}}),
\]
and this restriction is not determined by aggregation alone.
\end{definition}

\begin{proposition}[Aggregation equivalence need not imply causal equivalence] \label{prop:equivalence}
Let \(x_\ell,y_\ell\in X_\ell\) satisfy
\[
\Pi_\ell(x_\ell)=\Pi_\ell(y_\ell).
\]
If there exists an actor instance \(\alpha\) and a higher-level state \(x_r\) such that
\[
\mathcal{D}^{\alpha}_{x_r}(x_\ell)
\neq
\mathcal{D}^{\alpha}_{x_r}(y_\ell),
\]
then \(x_\ell\) and \(y_\ell\) are aggregation equivalent but not causally equivalent for \(\alpha\) at \(x_r\).
\end{proposition}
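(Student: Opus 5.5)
The argument is a direct unfolding of the two equivalence relations in Definition~\ref{def:hierarchical}, items (v) and (vi). No auxiliary construction is needed, because the proposition asserts a conjunction of two facts, and each follows from one hypothesis.

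\emph{First conjunct.} The hypothesis \(\Pi_\ell(x_\ell)=\Pi_\ell(y_\ell)\) is literally the defining condition of aggregation equivalence in item (v). Hence \(x_\ell \mathrel{\underset{\Pi_\ell}{\sim}} y_\ell\) holds immediately.

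\emph{Second conjunct.} Fix the actor instance \(\alpha=(a,r,c,\ell,B_\ell,\eta)\) and the higher-level state \(x_r\) supplied by the hypothesis. Item (vi) defines \(x_\ell\sim^{\alpha}_{x_r} y_\ell\) by a biconditional: it holds exactly when the two sets of admissible outgoing target-subsystem kernel sections, \(\mathcal{D}^{\alpha}_{x_r}(x_\ell)\) and \(\mathcal{D}^{\alpha}_{x_r}(y_\ell)\), are equal. Taking the contrapositive of the direction from causal equivalence to set equality, the assumed inequality of these sets gives \(x_\ell\not\sim^{\alpha}_{x_r} y_\ell\). Combining the two conjuncts yields the statement.

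\emph{Well-definedness check.} The one point needing care is that \(\mathcal{D}^{\alpha}_{x_r}(\cdot)\) is a well-defined set-valued object. This requires that the correspondence \(D^{\alpha}_{r\to\ell}(x_r)\subseteq\mathcal{K}_{\ell,B_\ell}\) is given, and that each kernel section \(K^{B_\ell}(\cdot\mid x_\ell)\) can be evaluated at every state of \(X_\ell\). Both hold in the finite or countable setting fixed at the start of Section~\ref{sec:core-definition}, so set equality there is ordinary equality of sets of conditional probability vectors.

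\emph{Optional remark on non-vacuity.} I would also note that the hypothesis is satisfiable, so that ``need not imply'' is genuinely witnessed. A small example suffices: take two lower-level states with the same aggregate under a threshold map \(\Pi_\ell\), and an actor whose admissibility test \(\eta\) reads a coordinate of \(x_\ell\) that \(\Pi_\ell\) discards. The toy model of Appendix~\ref{app:minimal-discrete-example} is presumably meant to supply such an instance.
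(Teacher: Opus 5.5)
Your proof is correct and takes the same approach as the paper: the first hypothesis is the defining condition of aggregation equivalence in item (v), and the inequality of the sets of admissible outgoing kernel sections negates the defining biconditional of causal equivalence in item (vi). Your well-definedness and non-vacuity remarks go beyond the paper's proof, and Appendix~\ref{app:minimal-discrete-example} supplies the witness you anticipated via the block-sum states \((1,0,0,0)\) and \((0,1,0,0)\).
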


\begin{proof}
The equality \(\Pi_\ell(x_\ell)=\Pi_\ell(y_\ell)\) gives aggregation equivalence. The displayed inequality says that the admissible outgoing kernel sections differ under the actor instance \(\alpha\). Hence the two states are not causally equivalent for \(\alpha\) at \(x_r\).
\end{proof}

\begin{remark}
The actor is not a free-floating cross-level force. It is represented at a higher level, often as an agent, institution, rule system, coalition, or role-bearing structure at that level. Its top-down role is expressed only through the interface $\eta$ and the induced constraint on admissible lower-level kernels.
\end{remark}

\begin{remark}
A kernel transformation is a special implementation of the constraint structure, not the definition of top-down causation itself. In this special case one may write a transformation from an unconstrained kernel to a constrained kernel. But the more general object is in-fact the admissible set of kernels selected by the higher-level actor-class constraints.
\end{remark}

\begin{remark}
This is close in spirit to the viability-kernel idea from which some of this thinking is derived. Viability theory studies evolutions that remain inside a constraint set, and the viability kernel is the set of initial states from which at least one viable evolution remains possible \citep{aubin2011viability,saintpierre1994viability}. Here the object is not yet a viability kernel over states. It is a constraint correspondence over admissible transition kernels. The analogy is useful because it keeps the top-down component constraint-based rather than force-based.
\end{remark}

\begin{remark}
The definition is intentionally discrete. Nothing here requires continuous time. A continuous-time version would require limits of the kernels and subordinators, and is left to another day.
\end{remark}

\section{Structural Components}\label{sec:components}

Definition \ref{def:hierarchical} provides the basic idea of what one means when one uses the terminology hierarchical causality in the sense of accommodating top-down causation. Here I briefly expand on some of the other terminology used for clarity.

\begin{definition}[Agent]
An agent is a local state-bearing component in one level of the hierarchy. At level $\ell$ and event count $n_{\ell}$ the state of the $i$-th site, agent, or component is written as $x^{\ell}_{i,n_{\ell}}\in S_{\ell}$, where $i\in I_{\ell}$. Agents carry local dynamics through the realised kernel in $D$. They are not, by themselves, top-down causes.
\end{definition}

\begin{definition}[Actor and actor instance]
An actor is a role-bearing entity represented at some level $r$. It is written $a\in\mathcal{A}_{r}$. It may itself be an agent, institution, coalition, rule system, or other state-bearing structure at level $r$. The actor becomes a top-down causal object only as an actor instance: $\alpha=(a,r,c,\ell,B_{\ell},\eta)$, 
where $c=\gamma(a)$ is the causation class, $\ell<r$ is the target level, $B_{\ell}\subseteq I_{\ell}$ is the target subsystem, and $\eta$ is the interface through which the constraint is implemented.
\end{definition}

\begin{definition}[Causation class]
A causation class is an abstract mode of top-down constraint. Here the main classes are: algorithmic structuring, fixed-goal control, adaptive selection, adaptive feedback, and adaptive-goal selection. Formally, a class is used through the constraint block $C$: if $c=\gamma(a)$, then the actor instance $\alpha$ determines which lower-level kernels remain admissible under the higher-level state.
\end{definition}

\begin{definition}[Transition kernel]
A transition kernel at level $\ell$ is a discrete one-step law
$K_{\ell,n_{\ell}}\in\mathcal{K}_{\ell}$, with transition probabilities written as $K_{\ell,n_{\ell}}(x'_{\ell}\mid x_{\ell})$. It belongs to the dynamics structure $D$. The kernel describes admissible within-level transitions. When top-down constraint is active, the relevant target-subsystem section
\(K^{B_\ell}_{\ell,n_\ell}\) must lie in
\(D^{\alpha}_{r\rightarrow\ell}(X_{r,n_r})\).
\end{definition}

\begin{definition}[Aggregation operator]
An aggregation operator is a map: $\Pi_{\ell}:X_{\ell}\rightarrow X_{\ell+1}$.
It makes a higher-level description from lower-level realisations. It need not be an average. It may be a coarse-graining, projection, threshold, classification, or other description map. Aggregation can generate macro variables and macro regularities, but aggregation alone is not top-down causation.
\end{definition}

\begin{definition}[Discrete subordination]
A discrete subordinator is a non-decreasing map $U_{\ell}:\mathbb{N}\rightarrow\mathbb{N}$ with $U_{\ell}(0)=0$. It links the global event index $m$ to the operational event count $n_{\ell}=U_{\ell}(m)$ at level $\ell$. This lets different levels update at different event scales without introducing continuous time.
\end{definition}

\begin{definition}[Aggregation and causal equivalence]
Aggregation equivalence is the equivalence relation induced by \(\Pi_\ell\). Actor-specific causal equivalence is the corresponding relation induced by the constrained admissible transition structure in Definition \ref{def:hierarchical}. Hence aggregation equivalence can hold while causal equivalence fails.
\end{definition}

\section{Equivalence}\label{sec:equivalence}

The formal schema gives two routes through the hierarchy, this is shown visually in Figure \ref{fig:equivalence-diagram}. The first route is descriptive: evolve at the lower level and then aggregate up. The second route is constrained: use a higher-level actor instance to restrict the lower-level kernel, evolve under that restriction, and only then aggregate. If these routes carry the same information for the question being asked, aggregation equivalence is enough. If not, the difference is precisely where hierarchical causality matters \citep{wilcox2014hierarchical}.

\begin{figure*}[ht!]
\centering
\begin{tikzpicture}[
  x=1cm,y=1cm,>=Latex,
  every node/.style={font=\small},
  block/.style={align=center,minimum width=3.55cm,minimum height=1.08cm},
  lab/.style={font=\small,fill=white,inner sep=1.5pt}
]
    \node[block] (xr)     at (0,3.2) {$X_{r,n_r}$\\ higher-level state};
    \node[block] (xrdesc) at (8.4,3.2) {$\Pi_{\ell}(X_{\ell,n_{\ell}+1})$\\ aggregate description};
    \node[block] (xl)     at (0,0) {$X_{\ell,n_{\ell}}$\\ lower-level state};
    \node[block] (xlp)    at (8.4,0) {$X_{\ell,n_{\ell}+1}$\\ lower-level state};
    \draw[->] (xl.east) -- node[lab,below=3pt] {$K_{\ell,n_{\ell}}$} (xlp.west);
    \draw[->] (xl.north) -- node[lab,left=3pt] {$\Pi_{\ell}$} (xr.south);
    \draw[->] (xlp.north) -- node[lab,right=3pt] {$\Pi_{\ell}$} (xrdesc.south);
    \draw[->] (xr.east) -- node[lab,above=3pt] {descriptive aggregate route} (xrdesc.west);
\draw[->,dashed,bend left=12]
  (xr.south east)
  to node[lab,right,align=center]
  {$D^{\alpha}_{r\rightarrow\ell}$\\ constraint on $K^{B_\ell}_{\ell,n_{\ell}}$}
  (xlp.north west);
    \node[align=center,font=\small] at (4.2,-1.15)
      {Non-commutation marks the gap between aggregation equivalence and causal equivalence.};
\end{tikzpicture}
\caption{Aggregation and constraint in a hierarchical causal system. The higher-level structure is shown above the lower-level dynamics. The horizontal lower arrow is lower-level dynamics, labelled by the realised lower-level kernel \(K_{\ell,n_{\ell}}\). The vertical arrows are aggregation. The dashed arrow is the top-down constraint induced by an actor instance \(\alpha=(a,r,c,\ell,B_{\ell},\eta)\), acting on the target-subsystem kernel section \(K^{B_\ell}_{\ell,n_{\ell}}\). The diagram is not assumed to commute. When it fails to commute, aggregation equivalence is weaker than causal equivalence.}
\label{fig:equivalence-diagram}
\end{figure*}

The dashed arrow is not a new dynamics. It is a restriction on admissible lower-level kernels. Thus the constrained kernel must satisfy
\[
K^{B_\ell}_{\ell,n_{\ell}}\in
\mathcal{K}_{\ell,B_{\ell}}\cap
D^{\alpha}_{r\rightarrow\ell}(X_{r,n_r}).
\]
If this condition changes the lower-level transitions available to the target subsystem, and if the change is not determined by aggregation alone, then the system carries top-down causal structure.

The distinction can also be stated without a diagram. Let $x_{\ell}$ be a lower-level state, and let $K_{\ell}$ be an unconstrained lower-level kernel. Aggregation gives the higher-level description $\Pi_{\ell}(x_{\ell})$. Bottom-up emergence concerns the behaviour of this aggregate under lower-level dynamics. Top-down causation enters only when a higher-level actor instance $\alpha$ restricts the relevant target-subsystem section of the lower-level kernel to $D^{\alpha}_{r\rightarrow\ell}(X_{r})$.

This is why aggregation equivalence is weaker than causal equivalence. If $x_{\ell}$ and $y_{\ell}$ have the same aggregate image, they are the same for a purely descriptive macro-variable. They need not be the same for a top-down causal question, because they may admit different constrained transition sets under the same actor instance. This is the non-commuting part of the hierarchy. Aggregating first and then evolving need not give the same causal information as constraining the lower-level dynamics and then aggregating. In many ways this is the entire point of the exercise.

\section{Formal schema}\label{sec:formal-schema}

In summary, the formal schema is
$\Psi=(H,D,C,U)$.

\subsection{The Hierarchy}
The hierarchy structure $H$ fixes the level-indexed descriptions. For each level $\ell$ there is a configuration space $X_{\ell}$, an index set $I_{\ell}$ of sites, agents, or components, an accessible state space $S_{\ell}$, and a state
\[
X_{\ell,n_{\ell}}
=
\left(x^{\ell}_{i,n_{\ell}}\right)_{i\in I_{\ell}}
\in X_{\ell},
\qquad
x^{\ell}_{i,n_{\ell}}\in S_{\ell}.
\]
Here $X_{\ell}$ is the level-$\ell$ configuration space. When every assignment of state values to level-$\ell$ sites or agents is admissible, $X_{\ell}=S_{\ell}^{I_{\ell}}$; otherwise $X_{\ell}\subseteq S_{\ell}^{I_{\ell}}$ is the admissible configuration space. The aggregation operator $\Pi_{\ell}$ maps from level $\ell$ to level $\ell+1$. It defines the descriptive macro-state. It does not define the top-down constraint. Aggregation equivalence $x_{\ell}{\sim}y_{\ell}$ is the equivalence relation induced by $H$ given $\Pi_{\ell}$.

\subsection{The Dynamics}
The dynamics $D$ fixes the within-level transition classes. For each level there is a class of admissible kernels $\mathcal{K}_{\ell}$. A realised kernel is $K_{\ell,n_{\ell}}\in\mathcal{K}_{\ell}$, with transition probabilities $K_{\ell,n_{\ell}}(x'_{\ell}\mid x_{\ell})$.
The lower-level dynamics can generate higher-level regularities through aggregation. This is the bottom-up direction. Here we have kept this discrete and there is no continuous-time limit in the formal schema. When a target subsystem $B_{\ell}\subseteq I_{\ell}$ is being constrained, we write $\mathcal{K}_{\ell,B_{\ell}}$ for the relevant kernel class on that subsystem. This may be a marginal, restricted, local, or conditional kernel class, depending on the example.

\subsection{The Constraints}
The constraints $C$ are where top-down causation enters. An actor instance is $\alpha=(a,r,c,\ell,B_{\ell},\eta)$, with $c=\gamma(a)$, and $r>\ell$ induces a correspondence
\[
D^{\alpha}_{r\rightarrow\ell}: X_{r}\rightrightarrows \mathcal{K}_{\ell,B_{\ell}}.
\]
Thus the higher-level state does not determine the lower-level state. It restricts the lower-level transition possibilities. At event counts $n_r$ and $n_\ell$, admissibility on the target subsystem means
\[
K^{B_\ell}_{\ell,n_{\ell}}
\in
D^{\alpha}_{r\rightarrow\ell}(X_{r,n_{r}}).
\]
This is the formal constraint condition.

The interface $\eta$ is deliberately broad. It may be a rule, signal, protocol, incentive, admissibility test, boundary condition, selection criterion, or feedback channel. It is included so that the actor does not become a mysterious cross-level force. The actor is represented at level $r$; the interface implements the constraint on level $\ell$.

\subsection{Time Emergence}

The time structure coordinates event counts. A global event index $m\in\mathbb{N}$ is mapped to level-specific event counts by
$n_{\ell}=U_{\ell}(m)$, with $U_{\ell}(0)=0$,
where each $U_{\ell}$ is non-decreasing. A lot more could be said here but it lets one level update more slowly or more irregularly than another. It also lets a higher-level constraint be evaluated at $n_r=U_r(m)$ while a lower-level kernel acts at $n_\ell=U_\ell(m)$.

\subsection{Causal consistency}

A minimal consistency condition is that the realised lower-level kernel, understood locally on the target subsystem where needed, is both dynamically admissible and constraint admissible. For an actor instance $\alpha$ this means
\[
K^{B_\ell}_{\ell,n_{\ell}}\in
\mathcal{K}_{\ell,B_{\ell}}\cap
D^{\alpha}_{r\rightarrow\ell}(X_{r,n_{r}}).
\]
If this intersection is empty, the actor-state and the lower-level dynamics are incompatible at that event. If it is non-empty, top-down causation is represented by choosing, selecting, or realising a kernel inside the constrained admissible set.

The hierarchical causality argument is therefore not that higher-level states cause lower-level states directly. The claim is weaker and cleaner: higher-level actor instances restrict the admissible lower-level transition structure, and this restriction is not determined by aggregation alone.

\section{Discussion}\label{sec:discussion}

The point of the construction is modest. It separates three operations that are often mixed together. Aggregation gives a higher-level description. Dynamics gives the lower-level production of future states. Constraint gives the top-down restriction of admissible lower-level transitions. A hierarchical causal system needs all three, and it also needs event-time bookkeeping so that levels need not update on the same clock. The actor-instance notation is useful. A higher-level actor is not assumed to act as a mysterious cross-level force. The actor is represented at a level, and its causal role is expressed through an interface that restricts a lower-level target subsystem. The main idea here is to ensure that causal equivalence is not confused with aggregation equivalence {\it e.g.} two lower-level states may have the same aggregate description and still differ causally because a higher-level actor instance may make different lower-level kernels admissible. Top-down causation can start to look like interventions, so we contrast this formulation with some of the existing formulations.

Structural causal models in the Pearl tradition give the sharpest intervention language. Their core idea is that causality is represented by structural assignments, directed graphs, interventions, and counterfactuals; the central question is what changes under an operation such as $do(X=x)$ \citep{pearl2009causality}. This is stronger than the present framework for identification, counterfactual analysis, and empirical causal inference. For the present purpose, however, structural causal models do not by themselves separate aggregation, event-time coordination, actor interfaces, and top-down restrictions on admissible lower-level kernels. In short, Pearl gives intervention semantics; this framework is basically a form of hierarchically constrained bookkeeping -- it is algorithmic by nature. The key issue still remains how best to understand the aggregation which is related to causal emergence.

The aggregation maps used here are also related to coarse-graining in statistical physics and multiscale modelling \citep{goldenfeld1992lectures}. The important distinction is that coarse-graining supplies a macro-description, while hierarchical causality asks whether an actor instance restricts the admissible lower-level transition structure. Aggregation is therefore necessary for the hierarchy, but it is not itself the top-down causal operation.

Hoel's causal-emergence programme asks when a macro-description can be more causally informative than a micro-description. Its core idea is that coarse-graining may reduce degeneracy or noise, so that causal power can be greater at the macro level than at the micro level \citep{hoel2013quantifying}. This is close to our concern with aggregation, but it asks a different question. Hoel gives a measure of macro causal strength. The present framework instead asks whether a higher-level actor instance restricts the admissible lower-level transition structure. Hoel is stronger when the state spaces, interventions, and transition probabilities are explicit enough for measurement. The present framework is more directly suited to systems where the key causal object is a rule, institution, protocol, or interface that constrains local dynamics but is not naturally a single macro variable. At the heart of this is really the role of abstraction.

Causal abstraction approaches ask when causal models at different resolutions are consistent. The core idea is that a macro causal model should preserve the relevant intervention structure of a micro causal model under an abstraction map \citep{rubenstein2017causal,beckers2019abstracting}. This is probably the closest formal neighbour to the aggregation part of the present framework. It is stronger in its treatment of consistency between causal models. The difference is that our object is not only an abstraction map from micro to macro variables. It also contains actor instances, target subsystems, interfaces, and constraint correspondences. Causal abstraction tells us when levels agree under intervention; this framework says how higher-level roles can restrict admissible lower-level kernels.

Controlled Markov processes and constrained Markov decision processes form another nearby reference class \citep{puterman1994mdp,altman1999cmdp}. They also work with transition kernels, admissible controls, and constraints on dynamic evolution. The difference is again one of purpose. In a constrained MDP the central object is usually a controller or policy that optimises a reward or cost criterion subject to constraints. Here the actor instance is not introduced as an optimiser. It is a role-bearing structure represented at a higher level, and its causal role is to restrict which lower-level kernels are admissible through the interface \(\eta\). Thus the present framework is closer to constrained-dynamics bookkeeping than to stochastic optimal control. It may later be specialised into a controlled Markov model, but that is not required by the definition.

It is for this reason that we have tried to follow the Ellis-style top-down causation. Its core idea is that higher-level organisation can matter causally through constraints, information control, multiple realisability, and equivalence classes \citep{ellis2012topdown,auletta2008topdown}. The strength of this approach is realism: it speaks naturally about rules, biological functions, institutions, adaptive control, and organised contexts. Its weakness is that the formal objects are often less explicit. We think this realism is important. The present framework should be read as a small descriptive discrete formalisation of this intuition. 
The actor instance \(\alpha=(a,r,c,\ell,B_{\ell},\eta)\), together with \(D^{\alpha}_{r\rightarrow\ell}\), is a way of making the Ellis-style constraint story operational. The higher level is not treated as a second physical force or as a magical intervention. The action is in the restriction of lower-level dynamics.

It is here that I think viability theory is helpful because it gives a mature language for constrained dynamics. Its core idea is that a system evolves subject to admissibility constraints, and the viability kernel identifies states from which at least one viable evolution remains possible \citep{aubin2011viability,saintpierre1994viability}. This is useful because it keeps the language constraint-based rather than force-based. But the present framework is not computing a viability kernel. Its constraint object is a correspondence over admissible lower-level kernels, not primarily a set of viable initial states. Viability theory is stronger when the constraint set and dynamics are explicit enough for computation. Our framework is more schematic, but it includes actor roles, interfaces, and hierarchical aggregation explicitly.

The present framework is deliberately narrower than these alternatives. The approach is architectural rather than ingredient-level. The ingredients themselves are familiar: aggregation, kernels, constraints, actors, and event-time maps. The idea is to keep these objects separate and to locate top-down causation specifically in actor-instance restrictions on admissible lower-level kernels.

It is in this sense that a hierarchical causal system is defined as \(\Psi=(H,D,C,U)\) (Definition \ref{def:hierarchical}); where \(H\) carries aggregation, \(D\) carries lower-level dynamics, \(C\) carries actor-instance constraints, and \(U\) carries event-time coordination. Its strength is that it keeps these pieces separate. This can be considered weak or incomplete because we have deliberately avoided specifying how to identify causal effects from data, how to quantify macro causal strength, how to prove abstraction consistency, or compute viability kernels. This is by design because its realism lies in the middle: it is not a full inference theory, but a modelling grammar for systems in which higher-level rules, institutions, protocols, or adaptive roles restrict what lower-level dynamics are admissible.

\section{Conclusion}\label{sec:conclusion}

The basic idea is simple. Bottom-up emergence is produced by lower-level dynamics and aggregation. Top-down causation is represented by higher-level actor instances constraining the admissible lower-level transition structure. The restriction must not be determined by aggregation alone; and this is what prevents the construction from collapsing into ordinary coarse-graining. 

To show this we use Definition \ref{def:hierarchical} to define a discrete hierarchical causal system as $\Psi=(H,D,C,U)$. The hierarchy structure $H$ carries levels and aggregation. The dynamics structure $D$ carries within-level kernels. The constraint structure $C$ carries actor instances and their restrictions on admissible lower-level kernels. The time structure $U$ carries discrete event-time subordination. This is then made clear in Proposition \ref{prop:equivalence}, which although somewhat trite and tautological makes a very serious 
point why in systems with top-down causation aggregation equivalence need not imply causal equivalence. A short example is then provided in Appendix \ref{app:minimal-discrete-example}.

\appendix
\section{Small discrete example}\label{app:minimal-discrete-example}
This example is deliberately simple. It is not a domain model. The idea is only to show the four elements $H,D,C,U$ working together.

\subsection{The Hierarchy}
Let level $0$ contain four binary agents,
\[
X_{0,n}=(x^0_{1,n},x^0_{2,n},x^0_{3,n},x^0_{4,n})\in\{0,1\}^{4}.
\]
Let level $1$ contain two block states,
\[
X_{1,k}=(x^1_{1,k},x^1_{2,k})\in\{0,1,2\}^{2},
\]
where the aggregation map is the pair of block sums
\[
\Pi_{0}(X_{0,n})=(x^0_{1,n}+x^0_{2,n},x^0_{3,n}+x^0_{4,n}).
\]
Thus $x^1_{1,k}$ is the number of active agents in the first block and $x^1_{2,k}$ is the number of active agents in the second block. When the upper-level state is obtained by aggregation at a common global event index $m$, one reads $n=U_0(m)$ and $k=U_1(m)$.

\subsection{The Dynamics}
At each lower-level event one site is selected. Without top-down constraint, the selected bit flips with probability $p$. This defines a simple finite Markov kernel $K_{0,n}$ on $\{0,1\}^{4}$. The point is that the lower level has its own somewhat trivial admissible dynamics.

\subsection{The Constraints}

Let level \(1\) contain an actor \(a\) with class \(c=\gamma(a)\) and target subsystem \(B_{0}=\{1,2\}\). The actor instance is
\[
\alpha=(a,1,c,0,B_{0},\eta).
\]
The interface \(\eta\) is a block rule with site-level content: when the first block has exactly one active site, only the first site in that block may change. Thus, when \(x^0_{1,n}+x^0_{2,n}=1\), admissible kernels assign probability zero to transitions that flip the second site. The probability mass of forbidden flips is assigned to the self-transition, or equivalently the constrained kernel is understood as an admissible kernel with zero probability on the forbidden transitions.

This makes the distinction between aggregation equivalence and causal equivalence explicit. The two lower-level states
\[
(1,0,0,0)
\quad\text{and}\quad
(0,1,0,0)
\]
have the same aggregate image
\[
\Pi_{0}(X_{0,n})=(1,0),
\]
but the actor-induced admissible transitions differ because the interface refers to the lower-level site structure inside the block. Equivalently,
\[
K^{B_0}_{0,n}\in D^{\alpha}_{1\rightarrow 0}(X_{1,k})
\]
means that the target-subsystem section of the lower-level kernel is compatible not only with the aggregate block count but also with the actor-instance rule acting through the interface \(\eta\).

\subsection{Time Sampling}
Let the lower level update at every global event,
\[
U_{0}(m)=m,
\]
and let the higher level update every two lower-level events,
\[
U_{1}(m)=\lfloor m/2\rfloor.
\]
The higher-level state therefore changes more slowly. The constraint used by the lower-level kernel at event $m$ is evaluated using $X_{1,U_{1}(m)}$.

\subsection{What the example shows}
 
The example shows the point of the construction. Two lower-level states can have the same aggregate block count and yet differ in the admissible next-step transitions once an actor instance is active. Aggregation gives the description. The unconstrained bit-flip kernel gives bottom-up dynamics. The actor instance restricts the admissible lower-level kernel through an interface that can see structure hidden by aggregation. The discrete event-time maps specify which event count is used at each level.


\section{Simulation-domain example}
\label{app:simulation-requirements}

This appendix is a requirements specification rather than an implementation. The purpose is to give a bridge from the small formal primer to a more serious simulation study. The intended domain is an agent-based financial market with a limit order book, ordinary trading agents, adaptive learning agents, and higher-level actors such as index rebalancers, regulators, circuit-breaker rules, and institutional execution schedules. The point is not to claim that this is the only useful domain. It is just a compact domain in which aggregation, local dynamics, actor constraints, and event-time maps can all be made explicit.

The important modelling choice is that the experiment is constraint-driven. A higher-level actor is not represented as an additional force acting on lower-level agents. It is represented as a role-bearing object whose interface changes the set of admissible lower-level transition kernels. Thus the simulation should test whether an actor instance
\[
\alpha=(a,r,c,\ell,B_\ell,\eta)
\]
induces a restriction
\[
K^{B_\ell}_{\ell,n_\ell}
\in
D^{\alpha}_{r\rightarrow \ell}(X_{r,n_r})
\]
that has effects not recoverable from aggregation-equivalent lower-level perturbations alone. This keeps the simulation aligned with the constraint view of top-down causation in Ellis-style and Auletta--Ellis--Jaeger-style accounts, and with the broader constraint-based account of causation in complex systems \citep{ellis2012topdown,auletta2008topdown,juarrero2023context}.

\subsection{Mapping to the schema}
\label{app:model-specification-requirements}

The simulation must instantiate the four pieces of the hierarchical schema
\[
\Psi=(H,D,C,U).
\]
For the market example this should be done as a concrete model specification, not just as a list of concepts. The lower level contains the limit order book, outstanding orders, agent inventories, private or latent signals, agent policy states, and event-level order flow. The upper level contains market observables and actor states. The aggregation map, written here as $G$ or $\Pi_0$, maps lower-level configurations into macro variables such as midprice, spread, depth, realised volatility, signed volume, and order-flow imbalance. The constraint block contains actor instances. The time block records order-arrival time, learning-update time, actor-event time, and any reporting clock.

A minimally useful implementation should therefore specify the components in Table~\ref{tab:simulation-components}. This table is part of the requirements. It prevents the market simulator from becoming only an informal agent-based model with a few macro variables added after the fact. It is also close in spirit to the ODD discipline for agent-based modelling, where model description, design concepts, implementation detail and replication conditions must be made explicit enough for others to inspect the model rather than only its outputs \citep{grimm2020odd}.

\begin{table*}[ht!]
\centering
\footnotesize
\caption{Required model components for a hierarchical-causality market simulation. The table gives the concrete simulation object and its role in the schema $\Psi=(H,D,C,U)$.}
\label{tab:simulation-components}
\begin{tabularx}{\textwidth}{p{0.16\textwidth}p{0.24\textwidth}p{0.24\textwidth}X}
\toprule
Schema block & Simulation object & Minimum required content & Purpose in the experiment \\
\midrule
$H$ & Level descriptions and aggregation map $G$ & Micro state $M_t$ containing orders, inventories, agent states, signals, learning states; macro state $A_t=G(M_t)$ containing price, liquidity, volatility and flow variables & Fixes what is meant by aggregation, and what macro description is being tested \\
$D$ & Within-level market dynamics & LOB matching engine, order submission, cancellation, execution, inventory updates, signal updates, and learning updates where applicable & Defines the lower-level transition kernels before actor constraints are applied \\
$C$ & Actor instances and interfaces & Rebalancer, regulator, circuit-breaker, institutional-flow or other role-bearing actor; target subsystem $B_\ell$; interface $\eta$; admissible-kernel restriction & Locates top-down causation in constraint, selection or rule-governed admissibility, not in an added force term \\
$U$ & Event-time maps & Micro tick count, actor-event epoch, learning-update count, macro logging epoch, and any calendar-time index & Prevents cross-scale timing from being confused with causal influence \\
Intervention harness & Baseline, macro intervention and micro translation ensembles & $E_0$, $E_M$, $E_\mu$, matched seeds, matched initial-condition distributions, documented translation rule & Supplies the paired counterfactual design \\
Diagnostics & Outcome and information measures & Price impact, spread, depth, volatility, order-flow imbalance, abstraction error, effective information, transfer entropy and robustness checks & Tests whether macro actor constraints are reducible to micro translations \\
\bottomrule
\end{tabularx}
\end{table*}

\subsection{Environment}
\label{app:environment-and-agents}

The environment is a discrete event-time limit order book. At each micro tick an order, cancellation, execution, or learning update may occur. Trades execute when orders cross, and the price is updated on execution. The simulator must record at least the book state, executed trades, cancellations, agent inventories, cash or wealth variables where relevant, and the event type. The model should not rely on a single global clock. It should record micro ticks and macro actor epochs separately.

The lower-level agent classes should be explicit enough to define the lower-level kernel. Table~\ref{tab:micro-agent-classes} gives the minimum useful agent specification. The exact behavioural rule can be simple, but the state variables and parameters must be logged. Otherwise the intervention does not have a well-defined target.

\begin{table*}[ht!]
\centering
\footnotesize
\caption{Suggested lower-level agent classes and required state variables. These are agents in the lower-level dynamics, not actors in the top-down sense.}
\label{tab:micro-agent-classes}
\begin{tabularx}{\textwidth}{p{0.16\textwidth}p{0.25\textwidth}p{0.24\textwidth}X}
\toprule
Agent class & State variables to log & Parameters or controls & Behavioural role \\
\midrule
Liquidity provider / market maker & Inventory, outstanding bid and ask orders, cash or marked-to-market wealth, recent fills & Spread target, inventory aversion, quote size distribution, cancellation rate, quote refresh rule & Supplies depth near the touch and absorbs order flow subject to inventory risk \\
Fundamental trader & Inventory, cash, perceived fundamental value, signal history, active orders & Signal strength, noise level, risk aversion, order-size rule, execution aggressiveness & Trades when price deviates from a noisy latent or perceived value \\
Noise trader & Inventory, cash, order direction state if persistent, active orders & Market-order arrival rate, buy/sell probability, order-size distribution & Supplies exogenous or weakly structured order flow \\
Adaptive strategist & Inventory, cash, recent profit and loss, observation state, action state, policy or value estimates, exploration state & Learning rate, discount factor, exploration parameter, reward definition, action set, policy-update clock & Learns an execution or aggressiveness policy from market feedback \\
Background liquidity process & Aggregate latent depth, cancellation intensity, exogenous shock state & Intensity parameters, regime state, shock distribution & Provides regime variation without making every source of liquidity an explicit strategic agent \\
\bottomrule
\end{tabularx}
\end{table*}

This table also clarifies the single-agent and many-agent learning cases used later in the appendix. A single adaptive strategist gives the simple learning-agent case. A population of adaptive strategists gives the many-learning-agent case. In both cases the learning state is part of the lower-level state. It cannot be treated as a hidden implementation detail.

\subsection{Actors and constraint interfaces}
\label{app:actors-and-constraint-interfaces}

Actors are first-class macro role bearers. They may issue orders or change rules, but they are not merely large lower-level agents. An actor matters hierarchically because it instantiates a causation class and restricts admissible lower-level kernels through an interface. Table~\ref{tab:actor-classes} gives the minimum actor specification for the market domain.

\begin{table*}[ht!]
\centering
\footnotesize
\caption{Suggested actor classes and their constraint interfaces. These are higher-level actor instances in $C$, not ordinary lower-level agents in $D$.}
\label{tab:actor-classes}
\begin{tabularx}{\textwidth}{p{0.15\textwidth}p{0.17\textwidth}p{0.21\textwidth}p{0.18\textwidth}X}
\toprule
Actor & Candidate causation class & Interface $\eta$ & Target subsystem $B_\ell$ & Example macro intervention and micro translation \\
\midrule
Index rebalancer & Fixed-goal control or algorithmic structuring & Scheduled execution rule, target weights, participation rule, trade list & Order-flow process, execution agents, affected instruments & Change pro-rata execution to time-weighted execution; translate by injecting an aggregate-equivalent order sequence across lower-level agents \\
Regulator / circuit breaker & Algorithmic structuring or fixed-goal control & Price-move threshold, lookback window, pause duration, reopening rule & Matching engine, order acceptance process, cancellation process & Change threshold or pause duration; translate by throttling order arrivals or cancellations without invoking the regulator rule \\
Institutional-flow actor & Fixed-goal control with possible adaptive feedback & Parent-order schedule, participation cap, urgency rule, information leakage rule & Order submission process, liquidity-taking flow, liquidity-provision response & Change persistence or urgency of execution; translate by changing order-arrival intensities or signed-volume profiles \\
Learning-governance actor & Adaptive feedback or adaptive-goal selection & Reward shaping, risk limit, capital constraint, policy-freeze rule, admissibility test & Adaptive strategists and their policy-update kernels & Change reward or risk constraint; translate by changing lower-level learner parameters while holding the macro rule absent \\
Liquidity-regime actor & Adaptive selection or contextual constraint & Regime classification rule, volatility state, liquidity state, activation threshold & Background liquidity process and market-maker quoting kernels & Change regime threshold or activation window; translate by changing static liquidity parameters across agents \\
\bottomrule
\end{tabularx}
\end{table*}

The important requirement is that each actor intervention must identify: the actor state, the target subsystem, the interface, the constrained kernel family, and the proposed micro translation. If any one of these is missing, the experiment cannot distinguish actor-level constraint from ordinary lower-level parameter perturbation.

\subsection{Aggregation}
\label{app:aggregation-clocks-observables}

The aggregation operator must be explicit and logged at the macro epoch. A useful minimal specification is
\[
A_t=G(M_t)=\{P_t,L_t,V_t,F_t\},
\]
where $P_t$ is a price variable such as the midprice after matching, $L_t$ is a liquidity vector such as spread and depth within a fixed number of ticks, $V_t$ is realised volatility over a declared window, and $F_t$ is signed volume or order-flow imbalance per unit event time. Table~\ref{tab:aggregation-and-logging} gives the required logging layer.

\begin{table*}[ht!]
\centering
\footnotesize
\caption{Aggregation and event-time logging requirements. The aggregation map should be fixed before interventions are compared.}
\label{tab:aggregation-and-logging}
\begin{tabularx}{\textwidth}{p{0.17\textwidth}p{0.24\textwidth}p{0.24\textwidth}X}
\toprule
Object & Definition in the simulation & Logged at & Reason required \\
\midrule
Price $P_t$ & Midprice, transaction price, or declared price statistic after matching & Micro ticks and macro epochs & Primary price-impact and volatility input \\
Liquidity $L_t$ & Bid--ask spread, depth within $q$ ticks, queue imbalance, cancellation rate & Micro ticks and macro epochs & Tests whether actor constraints change market resilience and local admissible transitions \\
Volatility $V_t$ & Realised volatility over a declared rolling event window & Macro epochs and reporting windows & Captures regime dependence and circuit-breaker triggers \\
Flow $F_t$ & Net signed volume, order-flow imbalance, or participation-rate measure & Micro ticks and macro epochs & Links actor schedules to lower-level order-flow consequences \\
Micro clock $n_0$ & Count of order arrivals, cancellations, executions or micro events & Every event & Defines lower-level transition order \\
Macro clock $n_1$ & Count of actor evaluations, rebalancing events or regulatory checks & Actor epochs & Defines when constraints are evaluated \\
Learning clock $n_L$ & Count of policy updates or reward-update events & Learning updates & Required when adaptive strategists are active \\
\bottomrule
\end{tabularx}
\end{table*}

This logging layer is not administrative. It is part of the causal design. The same aggregate flow profile can have different effects if it is delivered with different timing, different liquidity state, or different learning-state exposure. The clocks are therefore part of the object being tested, not just implementation detail.

\subsection{Single and many learning-agents}
\label{app:learning-agent-cases}

The learning-agent part should be separated into two cases.

First, a single learning-agent case can be treated as a standard reinforcement-learning problem: one adaptive trader interacts with a market environment, observes a state or signal, chooses actions, receives rewards, and updates a policy or value function \citep{sutton2018reinforcement}. In the market setting this is not only a generic reinforcement-learning abstraction. It has already been used as a concrete stepping stone in which a simple learning agent interacts with an agent-based market model \citep{dicks2024simple}. This case is useful as a baseline because the environment is approximately stationary if the other agents are fixed or non-learning. It also supplies a clean intermediate case between a non-learning agent-based market and the many-learning-agent formulation \citep{dicks2024}.

Second, the many-learning-agent case is different. If many agents learn at the same time, the effective environment of each learner changes as other learners adapt. This is the usual multi-agent reinforcement-learning difficulty: stability of the learning dynamics and adaptation to other changing agents become part of the model rather than nuisance details \citep{busoniu2008comprehensive}. This is the setting closest to many-learning-agent market simulations \citep{dicks2024}.

For hierarchical causality the distinction matters. In the single-agent case an actor intervention may change the agent's experienced environment. In the many-agent case an actor intervention may change the coupled learning process itself. The requirements are therefore stricter when learning is active. The simulation must log policy states, reward definitions, update times, exploration parameters, and whether learning is frozen or active during the intervention window. If these are not logged, the lower-level kernel is underspecified.

\subsection{Actor interventions}
\label{app:actor-interventions-micro-translations}

Each top-down experiment should be posed as a paired intervention. For a given actor instance $\alpha$, define a macro intervention $I_M$ by changing an actor rule or interface. Examples include changing a rebalancer execution schedule, changing a circuit-breaker threshold, or changing the persistence of an institutional execution programme.

For every macro intervention there must be a corresponding micro translation $I_\mu$. The micro translation is not assumed to be correct. It is a conservative attempt to reproduce the same aggregate input by changing lower-level parameters directly. For example, a rebalancer flow can be translated into a sequence of market orders distributed across ordinary agents; a persistent institutional flow can be translated into modified order-arrival intensities; and a circuit-breaker effect can be translated into lower-level throttling rules.

The paired protocol is then:
\begin{enumerate}[label=(\roman*)]
    \item run a baseline ensemble $E_0$;
    \item run a macro actor-intervention ensemble $E_M$;
    \item construct the micro translation $I_\mu$ using only information allowed by the experimental design;
    \item run the micro-translation ensemble $E_\mu$;
    \item compare $E_M$ and $E_\mu$ under matched seeds and matched initial-condition distributions wherever possible.
\end{enumerate}

The design should be biased against easy discovery of top-down autonomy. The micro translation should be as strong as possible without simply reintroducing the actor rule under another name. Otherwise the experiment risks finding autonomy only because the translation was weak.

\subsection{Primary intervention effect}
\label{app:primary-intervention-effect}

The primary test is interventionist. For each observable $Y$, such as price impact, volatility, spread, depth, realised liquidity, drawdown, or order-flow imbalance, compute
\[
\Delta_M(Y)=\mathbb{E}[Y\mid E_M]-\mathbb{E}[Y\mid E_0]
\]
and
\[
\Delta_\mu(Y)=\mathbb{E}[Y\mid E_\mu]-\mathbb{E}[Y\mid E_0].
\]
The relevant object is not the existence of a macro effect alone. The relevant object is the difference
\[
\Delta_M(Y)-\Delta_\mu(Y).
\]
If this difference is stable across seeds, regimes, and micro realisations, then the macro actor intervention is not being captured by the proposed micro translation. This is the direct simulation analogue of the distinction between aggregation equivalence and causal equivalence.

Uncertainty should be reported with bootstrap confidence intervals or comparable resampling intervals, together with distributional two-sample diagnostics. The experiment should report effect sizes, not only significance labels.

\subsection{Abstraction error}
\label{app:abstraction-error}

The second diagnostic should test whether the cross-level description approximately commutes. Causal abstraction work asks when causal models at different resolutions preserve relevant intervention structure under an abstraction map \citep{rubenstein2017causal,beckers2019abstracting}. Here the same idea can be used operationally, without adopting the whole structural-causal formalism.

Let $\mathcal{L}(Y_M)$ be the distribution of a macro observable under the macro actor intervention and let $\mathcal{L}(Y_\mu)$ be the distribution under the micro translation after aggregation through $G$. Define an abstraction error
\[
\varepsilon_{\mathrm{abs}}(Y)
=
d\left(\mathcal{L}(Y_M),\mathcal{L}(Y_\mu)\right),
\]
where $d$ may be a total variation distance, Wasserstein distance, energy distance, kernel two-sample statistic, or another pre-declared distributional discrepancy.

The interpretation is simple. If the macro intervention and the micro translation have matched aggregate inputs but produce different outcome laws, then aggregation-first and constraint-first descriptions are not equivalent for the causal question. This is the non-commuting part of the hierarchy made empirical.

\subsection{Causal-emergence diagnostics}
\label{app:causal-emergence-diagnostics}

Hoel-style causal emergence provides a useful secondary diagnostic.\footnote{Here \(EI\) denotes effective information in the causal-emergence sense of \citeauthor{hoel2013quantifying}: a scale-dependent diagnostic of how
strongly interventions at a chosen scale constrain possible past and future
states. It is used here only as a secondary diagnostic, not as the definition
of hierarchical causality itself \citep{hoel2013quantifying}.} Effective information asks whether interventions at a given scale reduce uncertainty about future states, and whether a macro description can carry more causal information than a micro description \citep{hoel2013quantifying}. Recent surveys of causal emergence also emphasise effective information and related quantities as central measures for connecting emergence and causality in complex systems \citep{yuan2024emergence}.

In this simulation, effective information should not be treated as the definition of hierarchical causality. The primary object remains the actor-induced constraint on admissible lower-level kernels. Effective information is instead a diagnostic of whether the chosen macro description is causally informative. A useful report is therefore
\[
EI_M - EI_\mu
\qquad\mbox{and}\qquad
EI_{\mathrm{macro}}-EI_{\mathrm{micro}},
\]
computed under clearly specified state partitions and intervention distributions. These quantities should be reported only with the discretisation, coarse-graining, and intervention ensemble used to estimate them. Otherwise the values are not comparable.

A strong result would combine three observations: the macro actor intervention differs from the micro translation, the abstraction error is non-negligible, and the macro description has higher effective information for the relevant outcome. The last point strengthens the interpretation, but it does not replace the paired intervention test.

\subsection{Directed information-flow}
\label{app:directed-information-flow}

Directed information-flow measures, such as transfer entropy, can be useful for timing and dependence diagnostics. They can test whether actor signals, rule changes, or macro event indicators carry predictive information for later lower-level variables, and whether lower-level market variables carry predictive information for later actor states.

These measures should be treated cautiously. They measure directed statistical dependence under a chosen history, binning, and conditioning scheme. They do not by themselves establish the constraint-based causal claim. In the requirements specification their role is therefore diagnostic using Transfer Entropy (TE)\footnote{The phrase ``directed information-flow'' is used descriptively here. Transfer entropy follows \citeauthor{schreiber2000transfer} time-series measure based on conditional
transition probabilities, while directed information in the stricter information-theoretic sense is usually associated with \citeauthor{massey1990directed} formulation for causal/feedback channels. Both are diagnostics of directional dependence or information flow; neither replaces the paired intervention test used here
\citep{schreiber2000transfer,massey1990directed}.}:
\[
TE(A\rightarrow M),
\qquad
TE(M\rightarrow A),
\]
where $A$ denotes actor signals or actor states and $M$ denotes selected micro or macro market variables. These should be computed across the baseline, macro intervention, and micro translation ensembles using the same estimator and conditioning set.

\subsection{Robustness}
\label{app:robustness-multiple-realisability}

Multiple realisability is central. The experiment should not depend on one privileged micro implementation. For each macro actor intervention, construct a family of micro realisations
\[
\rho\in\mathcal{R}
\]
that match the relevant aggregate profile but differ in agent composition, liquidity provision, learning-agent mix, or order-size distribution. Then report the stability of effects across this family, for example
\[
\mathrm{Var}_{\rho\in\mathcal{R}}[\Delta_M(Y;\rho)]
\]
and the corresponding distribution of abstraction errors.

The claim is strongest when the macro actor effect is stable across many lower-level realisations, while the detailed micro paths differ. That is the empirical analogue of causal equivalence being stronger than aggregation equivalence.

\subsection{Experimental design}
\label{app:experimental-design-requirements}

A simulation study built from this appendix should pre-specify:
\begin{enumerate}[label=(\roman*)]
    \item the baseline ensemble and all random seed rules;
    \item the actor interventions and their target subsystems;
    \item the micro translations and the information used to construct them;
    \item the aggregation operator $G$ and all macro observables;
    \item the event clocks and synchronisation rules;
    \item the learning-agent update rules, if learning is active;
    \item the primary outcome variables;
    \item the distributional distances and uncertainty procedures;
    \item the robustness regimes and micro-realisation families;
    \item all stopping rules and exclusion rules.
\end{enumerate}

The required output is not a single simulated path. It is an ensemble comparison with enough logging to reconstruct the lower-level kernels, actor states, aggregation maps, and event-time alignment.

\subsection{Interpretation rules}
\label{app:interpretation-rules}

There are three broad outcomes.

If
\[
\Delta_M(Y)\approx \Delta_\mu(Y),
\qquad
\varepsilon_{\mathrm{abs}}(Y)\approx 0,
\]
and no macro-scale information advantage is found, then the proposed actor intervention is practically reducible to the micro translation for that design.

If
\[
\Delta_M(Y)\neq \Delta_\mu(Y),
\qquad
\varepsilon_{\mathrm{abs}}(Y)>0,
\]
and the result is robust across micro realisations, then the actor intervention has distinct constraint effects not captured by aggregation-equivalent micro perturbations. This is the result most directly aligned with hierarchical causality as defined in this paper.

If the result depends strongly on liquidity regimes, learning activity, population composition, or event-time alignment, then the top-down effect is regime-dependent. This is not a failure. It means that the actor constraint has a domain of validity, and the task is to describe that domain rather than erase it.

\subsection{Limitations}
\label{app:simulation-limitations}

The main limitation is that the micro translation is design-dependent. A weak translation can make the macro actor look autonomous too easily. A translation that smuggles the actor rule back into the micro level can make the comparison trivial. This is why the translation rule must be documented and justified.

The second limitation is that information measures depend on partitions, estimators, and intervention distributions. They should be reported as diagnostics, not as standalone proof.

The third limitation is external validity. A limit order book simulator can test the internal coherence of the hierarchical-causality claim, but it does not by itself prove that the same actor constraints operate in empirical markets. The simulation is therefore a bridge from the primer to an experimental programme, not the programme itself.

\section*{Acknowledgements}

Thank you to my friends and colleagues for wonderful conversations. A sincere thank you to George Ellis for introducing me to his thinking around top-down causation some years ago, and many conversations since. Thanks also to Diane Wilcox for conversations and arguments about representation theory, finance, topology and how not to mathematise models. 

\bibliographystyle{elsarticle-harv}
\bibliography{HCP-Refs-v2.4.0}
\end{document}